\theoremstyle{prop}
\theoremstyle{proof}
\newtheorem{lem}{Lemma}
\begin{document}

\title{The Merton's Default Risk Model for Public Company}
\author{Battulga Gankhuu\footnote{
Department of Applied Mathematics, National University of Mongolia, Ulaanbaatar, Mongolia;
E-mail: battulgag@num.edu.mn}}
\date{}


\maketitle

\begin{abstract}
In this paper, we developed the Merton's structural model for public companies under an assumption that liabilities of the companies are observed. Using \citeauthor{Campbell88}'s approximation method, we obtain formulas of risk--neutral equity and liability values and default probabilities for the public companies. Also, the paper provides ML estimators of suggested model's parameters.\\[3ex]

\textbf{Keywords:} Public companies, Merton's structural model, ML estimators.\\[1ex]

\end{abstract}

\section{Introduction}

Dividend discount models (DDMs), first introduced by \citeA{Williams38} are common methods for equity valuation. The basic idea is that the market value of an equity of a firm is equal to the present value of a sum of dividend paid by the firm and market value of the firm, which correspond to the next period. If the firm is financed by liabilities, which are publicly traded in the exchanges, the same idea can be used to value the liabilities, see \citeA{Battulga23b}. As the outcome of DDMs depends crucially on dividend payment forecasts, most research in the last few decades has been around the proper estimations of dividend development. Also, parameter estimation of DDMs is a challenging task. Recently, \citeA{Battulga22a} introduced parameter estimation methods for practically popular DDMs. To estimate parameters of the required rate of return, \citeA{Battulga23b} used the maximum likelihood method and Kalman filtering. Reviews of some existing DDMs that include deterministic and stochastic models can be found in \citeA{dAmico20a} and \citeA{Battulga22a}. 

Existing stochastic DDMs have one common disadvantage: If dividend and debt payments have chances to take negative values, then the market values of the firm's equity and liabilities can take negative values with a positive probability, which is the undesirable property for the market values. A log version of the stochastic DDM, which is called by dynamic Gordon growth model was introduced by \citeA{Campbell88}, who derived a connection between log price, log dividend, and log return by approximation. Since their model is in a log framework, the stock prices and dividends get positive values. For this reason, by augmenting the dynamic Gordon growth model, \citeA{Battulga24d} obtained pricing and hedging formulas of European options and equity--linked life insurance products for public companies. For private companies, using the log private company valuation model, based on the dynamic Gordon growth model, \citeA{Battulga24c} developed closed--form pricing and hedging formulas for the European options and equity--linked life insurance products and valuation formula.

Sudden and dramatic changes in the financial market and economy are caused by events such as wars, market panics, or significant changes in government policies. To model those events, some authors used regime--switching models. The regime--switching model was introduced by seminal works of \citeA{Hamilton89,Hamilton90} (see also books of \citeA{Hamilton94} and \citeA{Krolzig97}) and the model is hidden Markov model with dependencies, see \citeA{Zucchini16}. However, Markov regime--switching models have been introduced before Hamilton (1989), see, for example,  \citeA{Goldfeld73}, \citeA{Quandt58}, and \citeA{Tong83}. The regime--switching model assumes that a discrete unobservable Markov process generates switches among a finite set of regimes randomly and that each regime is defined by a particular parameter set. The model is a good fit for some financial data and has become popular in financial modeling including equity options, bond prices, and others. Recently, under normal framework, \citeA{Battulga22b} obtained pricing and hedging formulas for the European options and equity--linked life insurance products by introducing a DDM with regime--switching process. Also, \citeA{Battulga24a} developed option pricing formulas for some frequently used options by using Markov--Switching Vector Autoregressive process. To model required rate of return on stock, \citeA{Battulga23b} applied a two--regime model. The result of the paper reveals that the regime--switching model is good fit for the required rate of return.

Default risk is a possibility that a borrower fails to make full and timely payments of principal and interest, which are stated in the debt contract. The structural model of default risk relates to option pricing. In this model, a default threshold, which represents the liabilities of the company is seen as a strike price and a asset value of the company is seen as underlying asset of the European option. For this reason, this approach is also referred to as the firm--value approach or the option--theoretic approach. Original idea of the structural model goes back to \citeA{Black73} and \citeA{Merton74}. \citeA{Black73} developed a closed--form formula for evaluating the European option and \citeA{Merton74} obtained pricing formula for the liabilities of a company under Black--Scholes framework. \citeA{Battulga22b} tried to estimate default probability using regime--switching process.

This paper is organized as follows. In section 2 of the paper, we develop stochastic DDM for market values of equities and liabilities of companies using the \citeauthor{Campbell88}'s \citeyear{Campbell88} approximation method. Then, we model the market values of assets of the companies using the approximation method once again. In section 3, we obtain pricing formulas of the European call and put options on the market value of the asset. After that, we develop formulas of risk--neutral equity and debt values, and default probability. In section 4, we study ML estimators of suggested model's parameters. In section 5, we conclude the study. Finally, in section 6, we provide Lemmas, which is used in the paper.

\section{Market Value Model of Equity and Liability}

Let $(\Omega,\mathcal{H}_T,\mathbb{P})$ be a complete probability space, where $\mathbb{P}$ is a given physical or real--world probability measure and $\mathcal{H}_T$ will be defined below. To introduce a regime--switching in Merton's default risk model, we assume that $\{s_t\}_{t=1}^T$ is a homogeneous Markov chain with $N$ state and $\mathsf{P}:=\{p_{ij}\}_{i=0,j=1}^N$ is a random transition probability matrix, where $p_0:=(p_{01},\dots,p_{0N})$ is a $(1\times N)$ initial probability vector. In this paper, we assume that market values of equities and liabilities of companies are observed. For a case that market values of equities and liabilities are both unobserved, we refer to \citeA{Battulga24e}.

Dividend discount models (DDMs), first introduced by \citeA{Williams38}, are a popular tool for equity valuation. The basic idea of all DDMs is that the market value of equity at time $t-1$ of the firm equals the sum of the market value of equity at time $t$ and dividend payment at time $t$ discounted at risk--adjusted rate (required rate of return on stock). Let us assume there are $n$ companies. Therefore, for successive market values of equity of $i$--th company, the following relation holds 
\begin{equation}\label{05001}
V_{i,t}^e=(1+k_{i,t}^e)V_{i,t-1}^e-p_{i,t}^e,~~~t=1,\dots,T,
\end{equation}
where $k_{i,t}^e$ is the required rate of return on the equity (investors) at regime $s_t$, $V_{i,t}^e$ is the market value of equity, and $p_{i,t}^e$ is the dividend payment for investors, respectively, at time $t$ of $i$--th company. On the other hand, to model market values of liabilities of the company, it is the well known fact that successive values of a debt of company or individual is given by the following equation
\begin{equation}\label{05002}
D_t=(1+i)D_{t-1}-d_t
\end{equation}
where $D_t$ is a debt value at time $t$, $d_t$ is a debt payment at time $t$, and $i$ is a interest rate of the debt, see, e.g., \citeA{Gerber97}. Note that $D_t$ represents the principal outstanding, that is, the remaining debt immediately after $d_t$ has been paid and debt equation \eqref{05002} shares same formula with market value of equity given in equation \eqref{05001}. The idea of equation \eqref{05002} can be used to model a value of liabilities of the company, namely,
\begin{equation}\label{05003}
V_{i,t}^\ell=(1+k_{i,t}^\ell)V_{i,t-1}^\ell-p_{i,t}^\ell, ~~~t=1,\dots,T,
\end{equation}
where $k_{i,t}^\ell$ is a required rate of return on the liability (debtholders) at regime $s_t$, $V_{i,t}^\ell$ is a market value of the liability, and $p_{i,t}^\ell$ is a debt payment, which includes interest payment for debtholders, respectively, at time $t$ of the company, see \citeA{Battulga23b}.

To keep notations simple, let $V_t^e:=(V_{1,t}^e,\dots,V_{n,t}^e)'$ be an $(n\times 1)$ vector of market values of equities, $V_t^\ell:=(V_{1,t}^\ell,\dots,V_{n,t}^\ell)'$ be an $(n\times 1)$ vector of market values of liabilities, $k_t^e:=(k_{1,t}^e,\dots,k_{n,t}^e)'$ be an $(n\times 1)$ vector of required rate of returns on equities, $k_t^\ell:=(k_{1,t}^\ell,\dots,k_{n,t}^\ell)'$ be an $(n\times 1)$ vector of required rate of returns on liabilities, $p_t^e:=(p_{1,t}^e,\dots,p_{n,t}^e)'$ be an $(n\times 1)$ vector of the dividend payments, and $p_t^\ell:=(p_{1,t}^\ell,\dots,p_{n,t}^\ell)'$ be an $(n\times 1)$ vector of the debt payments, respectively, at time $t$, $I_n$ be an $(n\times n)$ identity matrix, $i_n:=(1,\dots,1)'$ be an $(n\times 1)$ vector, whose all elements equal one.

If payments of dividend and debt have chances to take negative values, then the market values of equity and liability of a company can take negative values with a positive probability, which is the undesirable property for the market values of the equity and liability. That is why, we follow the idea in \citeA{Campbell88}. As a result, the market values of equity and liability of the company take positive values. Following the idea in \citeA{Campbell88}, one can obtain the following approximation
\begin{equation}\label{05004}
\exp\{\tilde{k}_t\}=(V_t+p_t)\oslash V_{t-1}\approx \exp\Big\{\tilde{V}_t-\tilde{V}_{t-1}+\ln(g_t)+G_t^{-1}(G_t-I_{2n})\big(\tilde{p}_t-\tilde{V}_t-\mu_t\big)\Big\},
\end{equation}
where $\oslash$ is a component--wise division of two vectors, $\tilde{k}_t:=\big((\ln(i_n+k_t^e))',(\ln(i_n+k_t^\ell))'\big)'$ is a ($2n\times 1$) log required rate of return process at time $t$, $V_t:=\big((V_t^e)',(V_t^\ell)'\big)'$ is a $(2n\times 1)$ market value process at time $t$, $p_t:=\big((p_t^e)',(p_t^\ell)'\big)'$ is a $(2n\times 1)$ payment process at time $t$, $\tilde{V}_t:=\ln(V_t)$ is a $(2n\times 1)$ log market value process at time $t$, $\tilde{p}_t:=\ln(p_t)$ is a $(2n\times 1)$ log payment process at time $t$, $\mu_t:=\mathbb{E}\big[\tilde{p}_t-\tilde{V}_t\big|\mathcal{F}_0\big]$ is a $(2n\times 1)$ mean log payment--to--market value process at time $t$ of the companies and $\mathcal{F}_0$ is initial information, which will be defined below, $g_t:=i_{2n}+\exp\{\mu_t\}$ is a $(2n\times 1)$ linearization parameter, and $G_t:=\text{diag}\{g_t\}$ is a $(2n\times 2n)$ diagonal matrix, whose diagonal elements are $g_t$. As a result, for the log market value process at time $t$, the following approximation holds
\begin{equation}\label{05005}
\tilde{V}_t\approx G_t(\tilde{V}_{t-1}-\tilde{p}_t+\tilde{k}_t)+\tilde{p}_t-h_t.
\end{equation}
where $h_t:=G_t\big(\ln(g_t)-\mu_t\big)+\mu_t$ is a linearization parameter and the model is called by dynamic Gordon growth model, see \citeA{Campbell88}. We assume that values of the log payment process $\tilde{p}_t$ are known at time 0. For a quality of the approximation, we refer to \citeA{Campbell97}. To estimate parameters of the dynamic Gordon growth model, to price the Black--Scholes call and put options on asset values of the companies, and to calculate a joint default probability of the companies, we suppose that the log required rate of return process at time $t$ is represented by a sum of deterministic process, a term, which depends on log spot interest rates, and white noise process, namely, 
\begin{equation}\label{05006}
\tilde{k}_t=C_{k,s_t}\psi_t+\delta \tilde{r}_t+u_t,
\end{equation}
where $\psi_t:=(\psi_{1,t},\dots,\psi_{l,t})'$ is an $(l\times 1)$ vector, which consists of exogenous variables, $C_{k,s_t}$ is an $(n\times l)$ random matrix at regime $s_t$, $\delta:=(0,i_n')'$ is a $(2n\times 1)$ vector, whose first $n$ elements are zero and others are one, $\tilde{r}_t:=\ln(1+r_t)$ is a log spot interest rate at time $t$, $r_t$ is a spot interest rate for borrowing and lending over a period $(t,t+1]$, $u_t$ is an $(n\times 1)$ white noise process. In this case, equation \eqref{05005} becomes 
\begin{equation}\label{05007}
\tilde{V}_t=G_t(\tilde{V}_{t-1}-\tilde{p}_t+C_{k,s_t}\psi_t+\delta \tilde{r}_t)+\tilde{p}_t-h_t+G_tu_t.
\end{equation}

Now, we model spot interest rate $r_t$. By using the Dickey--Fuller test, it can be confirm that quarterly log spot interest rate is the unit--root process with drift, see data IRX of Yahoo Finance. Also, due to the fact that log return of quarterly S\&P 500 index is stationary process, see data SPX of Yahoo Finance, we model the log required rate of return on equity by a trend stationary process. Moreover, there may exist a cointegration between the log required rate of return on debtholders and the spot interest rate. For this reason, we model the required rate of return process by equation \eqref{05006}. Consequently, we model the log spot rate by the following equation
\begin{equation}\label{05008}
\tilde{r}_t=c_{r,s_t}'\psi_t+\tilde{r}_{t-1}+v_t,
\end{equation}
where $c_{r,s_t}$ is an $(l\times 1)$ random vector at regime $s_t$ and $v_t$ is a white noise process.

As a result, by combining equations \eqref{05007} and \eqref{05008}, we arrive the following system
\begin{equation}\label{05009}
\begin{cases}
\tilde{V}_t=\nu_{V,t}+G_t\tilde{V}_{t-1}+G_t\delta \tilde{r}_t+G_tu_t\\
\tilde{r}_t=\nu_{r,t}+\tilde{r}_{t-1}+v_t,
\end{cases}~~~\text{for}~t=1,\dots,T
\end{equation}
under the real probability measure $\mathbb{P}$, where $\nu_{V,t}:=G_tC_{k,s_t}\psi_t-(G_t-I_{2n})\tilde{p}_t-h_t$ is an $(2n\times 1)$ intercept process of the log market value process $\tilde{V}_t$ and $\nu_{r,t}:=c_{r,s_t}'\psi_t$ is an $(1\times 1)$ intercept process of the log spot rate process $r_t$. Let us denote a dimension of system \eqref{05011} by $\tilde{n}$, that is, $\tilde{n}:=2n+1$.

Finally, we model the market values of the assets of the companies. Since the market values of the assets equal sums of the market values of equities and liabilities of the companies, we have 
$$V_t^a=V_t^e+V_t^\ell,$$
where $V_t^a$ is an $(n\times 1)$ asset value process at time $t$ of the companies. Using the same approximation method, a log asset value process of the companies is approximated by the following equation
\begin{eqnarray}\label{05010}
\tilde{V}_t^a:=\ln(V_t^e+V_t^\ell)&\approx &(G_t^a)^{-1}\tilde{V}_t^e+\big(I_n-(G_t^a)^{-1}\big)\tilde{V}_t^\ell+(G_t^a)^{-1}h_t^a\nonumber\\
&=&W_t^a\tilde{V}_t+(G_t^a)^{-1}h_t^a 
\end{eqnarray}
where $\mu_t^a:=\mathbb{E}[\tilde{V}_t^\ell-\tilde{V}_t^e|\mathcal{F}_0]$ is an $(n\times 1)$ mean log liability value--to--equity value ratio, $g_t^a:=i_n+\exp\{\mu_t^a\}$ and $h_t^a:=G_t^a(\ln(g_t^a)-\mu_t^a)+\mu_t^a$ are $(n\times 1)$ linearization parameters for the log asset process, $G_t^a:=\text{diag}\{g_t^a\}$ is an $(n\times n)$ diagonal matrix, and $W_t^a:=\big[(G_t^a)^{-1}:I_n-(G_t^a)^{-1}\big]$ is an $(n\times 2n)$ weight matrix of the approximation, respectively, at time $t$ of the company.

The stochastic properties of system \eqref{05009} is governed by the random vectors $\{u_1,\dots,u_T, v_1,\dots,v_T\}.$ We assume that for $t=1,\dots,T$, conditional on information $\mathcal{H}_0$, which is defined below, the white noise process $\xi_t:=(u_t',v_t)'$ are mutually independent and follows normal distribution, namely,
\begin{equation}\label{05011}
\xi_t~|~\mathcal{H}_0\sim \mathcal{N}(0,\Sigma_{s_t})
\end{equation}
under the real probability measure $\mathbb{P}$, where
\begin{equation}\label{05012}
\Sigma_{s_t}=\begin{bmatrix}
\Sigma_{uu,s_t} & \Sigma_{uv,s_t}\\
\Sigma_{vu,s_t} & \Sigma_{vv,s_t}
\end{bmatrix}.
\end{equation}
is a covariance matrix of the $(\tilde{n}\times 1)$ white noise process $\xi_t$.

\section{Merton's Structural Model}

The Merton's model \citeyear{Merton74} is one of the structural models used to measure credit risk. \citeA{Merton74} was aim to value the liabilities of a specific company. As mentioned above the model connects the European call and put options.  The European call and put options are contracts that give their owner the right, but not the obligation, to buy or sell shares of a stock of a company at a predetermined price by a specified date. Let us start this section by considering a valuation method of the European options on the asset values of the companies.

Let $T$ be a time to maturity of the European call and put options, $x_t:=(\tilde{V}_t',\tilde{r}_t)'$ be $(\tilde{n}\times 1)$ process at time $t$ of endogenous variables, and $C_{s_t}:=\big[C_{k,s_t}':c_{r,s_t}\big]'$ be random coefficient matrix at regime $s_t$. We introduce stacked vectors and matrices: $x:=(x_1',\dots,x_T')'$, $s:=(s_1,\dots,s_T)'$, $C_s:=[C_{s_1}:\dots:C_{s_T}]$, and $\Sigma_s:=[\Sigma_{s_1}:\dots:\Sigma_{s_T}]$. We suppose that the white noise process $\{\xi_t\}_{t=1}^T$ is independent of the random coefficient matrix $C_s$, random covariance matrix $\Sigma_s$, random transition matrix $\mathsf{P}$, and regime--switching vector $s$ conditional on initial information $\mathcal{F}_0:=\sigma(x_0,\psi_{1},\dots,\psi_T,\tilde{p}_1,\dots,\tilde{p}_T)$. Here for a generic random vector $X$, $\sigma(X)$ denotes a $\sigma$--field generated by the random vector $X$, $\psi_1,\dots,\psi_T$ are values of exogenous variables and they are known at time zero, and according to the assumption, values of $\tilde{p}_1,\dots,\tilde{p}_T$ are known at time zero. We further suppose that the transition probability matrix $\mathsf{P}$ is independent of the random coefficient matrix $C_s$ and covariance matrix $\Sigma_s$ given initial information $\mathcal{F}_0$ and regime--switching process $s_t$.

To ease of notations, for a generic vector $o=(o_1',\dots,o_T')'$, we denote its first $t$ and last $T-t$ sub vectors by $\bar{o}_t$ and $\bar{o}_t^c$, respectively, that is, $\bar{o}_t:=(o_1',\dots,o_t')'$ and $\bar{o}_t^c:=(o_{t+1}',\dots,o_T')'$. We define $\sigma$--fields: for $t=0,\dots,T$, $\mathcal{F}_{t}:=\mathcal{F}_0\vee\sigma(\bar{x}_{t})$ and $\mathcal{H}_{t}:=\mathcal{F}_t\vee \sigma(C_s)\vee \sigma(\Sigma_s)\vee \sigma(\mathsf{P})\vee\sigma(s)$, where for generic sigma fields $\mathcal{O}_1,\dots,\mathcal{O}_k$, $\vee_{i=1}^k \mathcal{O}_i $ is the minimal $\sigma$--field containing the $\sigma$--fields $\mathcal{O}_i$, $i=1,\dots,k$. Observe that $\mathcal{F}_{t}\subset \mathcal{H}_{t}$ for $t=0,\dots,T$. 

For the first--order Markov chain, a conditional probability that the regime at time $t+1$, $s_{t+1}$ equals some particular value conditional on the past regimes, $\bar{s}_t$, transition probability matrix $\mathsf{P}$, and initial information $\mathcal{F}_0$ depends only through the most recent regime at time $t$, $s_t$, transition probability matrix $\mathsf{P}$, and initial information $\mathcal{F}_0$ that is,
\begin{equation}\label{06011}
p_{s_ts_{t+1}}:=\mathbb{P}[s_{t+1}=s_{t+1}|s_t=s_t,\mathsf{P},\mathcal{F}_0]=\mathbb{P}\big[s_{t+1}=s_{t+1}|\bar{s}_t=\bar{s}_t,\mathsf{P},\mathcal{F}_0\big]
\end{equation} 
for $t=0,\dots,T-1$, where $p_{0s_1}=p_{s_0s_1}=\mathbb{P}[s_1=s_1|\mathsf{P},\mathcal{F}_0]$ is the initial probability. A distribution of a white noise vector $\xi:=(\xi_1',\dots,\xi_T')'$ is given by
\begin{equation}\label{06012}
\xi=(\xi_1',\dots,\xi_T')'~|~\mathcal{H}_0\sim \mathcal{N}(0,\bar{\Sigma}),
\end{equation}
where $\bar{\Sigma}:=\text{diag}\{\Sigma_{s_1},\dots,\Sigma_{s_T}\}$ is a block diagonal matrix. 

To remove duplicates in the random coefficient matrix $(C_s,\Sigma_s)$, for a generic regime--switching vector with length $k$, $o=(o_1,\dots,o_k)'$, we define sets
\begin{equation}\label{08006}
\mathcal{A}_{\bar{o}_t}:=\mathcal{A}_{\bar{o}_{t-1}}\cup\big\{o_t\in \{o_1,\dots,o_k\}\big|o_t\not \in \mathcal{A}_{\bar{o}_{t-1}}\big\},~~~t=1,\dots,k,
\end{equation}
where for $t=1,\dots,k$, $o_t\in \{1,\dots,N\}$ and an initial set is the empty set, i.e., $\mathcal{A}_{\bar{o}_0}=\O$. The final set $\mathcal{A}_o=\mathcal{A}_{\bar{o}_k}$ consists of different regimes in regime vector $o=\bar{o}_k$ and $|\mathcal{A}_o|$ represents a number of different regimes in the regime vector $o$. Let us assume that elements of sets $\mathcal{A}_s$, $\mathcal{A}_{\bar{s}_t}$, and difference sets between the sets $\mathcal{A}_{\bar{s}_t^c}$ and $\mathcal{A}_{\bar{s}_t}$ are given by $\mathcal{A}_s=\{\hat{s}_1,\dots,\hat{s}_{r_{\hat{s}}}\}$, $\mathcal{A}_{\bar{s}_t}=\{\alpha_1,\dots,\alpha_{r_\alpha}\}$, and $\mathcal{A}_{\bar{s}_t^c}\backslash \mathcal{A}_{\bar{s}_t}=\{\delta_1,\dots,\delta_{r_\delta}\}$, respectively, where $r_{\hat{s}}:=|\mathcal{A}_s|$, $r_\alpha:=|\mathcal{A}_{\bar{s}_t}|$, and $r_\delta:=|\mathcal{A}_{\bar{s}_t^c}\backslash \mathcal{A}_{\bar{s}_t}|$ are numbers of elements of the sets, respectively. We introduce the following regime vectors: $\hat{s}:=(\hat{s}_1,\dots,\hat{s}_{r_{\hat{s}}})'$ is an $(r_{\hat{s}}\times 1)$ vector, $\alpha:=(\alpha_1,\dots,\alpha_{r_\alpha})'$ is an $(r_\alpha\times 1)$ vector, and $\delta=(\delta_1,\dots,\delta_{r_\delta})'$ is an $(r_\delta\times 1)$ vector. For the regime vector $a=(a_1,\dots,a_{r_a})' \in\{\hat{s},\alpha,\delta\}$, we also introduce duplication removed random coefficient matrices, whose block matrices are different:  $C_a=[C_{a_1}:\dots:C_{a_{r_a}}]$ is an $(\tilde{n}\times [lr_a])$ matrix, $\Sigma_a=[\Sigma_{a_1}:\dots:\Sigma_{a_{r_a}}]$ is an $(\tilde{n}\times [\tilde{n}r_a])$ matrix, and $(C_a,\Sigma_a)$. 

We assume that for given duplication removed regime vector $\hat{s}$ and initial information $\mathcal{F}_0$, the coefficient matrices $(C_{\hat{s}_1},\Sigma_{\hat{s}_1}),\dots,(C_{\hat{s}_{r_{\hat{s}}}},\Sigma_{\hat{s}_{r_{\hat{s}}}})$ are independent under the real probability measure $\mathbb{P}$. Under the assumption, conditional on $\hat{s}$ and $\mathcal{F}_0$, a joint density function of the random coefficient random matrix $(C_{\hat{s}},\Sigma_{\hat{s}})$ is represented by
\begin{equation}\label{08010}
f\big(C_{\hat{s}},\Sigma_{\hat{s}}\big|\hat{s},\mathcal{F}_0\big)=\prod_{t=1}^{r_{\hat{s}}}f\big(C_{\hat{s}_t},\Sigma_{\hat{s}_t}\big|\hat{s}_t,\mathcal{F}_0\big)
\end{equation}
under the real probability measure $\mathbb{P}$, where for a generic random vector $X$, we denote its density function by $f(X)$ under the real probability measure $\mathbb{P}$. Using the regime vectors $\alpha$ and $\delta$, the above joint density function can be written by 
\begin{equation}\label{08011}
f\big(C_{\hat{s}},\Sigma_{\hat{s}}\big|\hat{s},\mathcal{F}_0\big)=
f\big(C_{\alpha},\Sigma_{\alpha}\big|\alpha,\mathcal{F}_0\big)f_*\big(C_{\delta},\Sigma_{\delta}\big|\delta,\mathcal{F}_0\big)
\end{equation}
where the density function $f_*\big(C_{\delta},\Sigma_{\delta}\big|\delta,\mathcal{F}_0\big)$ equals
\begin{equation}\label{08012}
f_*\big(C_\delta,\Sigma_\delta\big|\delta,\mathcal{F}_0\big):=
\begin{cases}
f\big(C_\delta,\Sigma_\delta\big|\delta,\mathcal{F}_0\big),& \text{if}~~~r_\delta\neq 0,\\
1,& \text{if}~~~r_\delta= 0.
\end{cases}
\end{equation}

\subsection{Risk--Neutral Probability Measure}

To price the European call and put options, we need to change from the real probability measure to some risk--neutral measure. Let $D_t:=\exp\{-\tilde{r}_1-\dots-\tilde{r}_t\}=1\big/\prod_{s=1}^t(1+r_s)$ be a predictable discount process, where $\tilde{r}_t$ is the log spot interest rate at time $t$. According to \citeA{Pliska97} (see also \citeA{Bjork20}), for all companies, conditional expectations of the return processes $k_{i,t}^e=(V_{i,t}^e+p_{i,t}^e)/V_{i,t-1}^e-1$ and $k_{i,t}^\ell=(V_{i,t}^\ell+p_{i,t}^\ell)/V_{i,t-1}^\ell-1$ for $i=1,\dots,n$ must equal the spot interest rate $r_t$ under some risk--neutral probability measure $\tilde{\mathbb{P}}$ and a filtration $\{\mathcal{H}_t\}_{t=0}^T$. Thus, it must hold
\begin{equation}\label{05016}
\tilde{\mathbb{E}}\big[(V_t+p_t)\oslash V_{t-1}\big|\mathcal{H}_{t-1}\big]=\exp\big\{\tilde{r}_ti_{2n}\big\}
\end{equation}
for $t=1,\dots,T$, where $\tilde{\mathbb{E}}$ denotes an expectation under the risk--neutral probability measure $\mathbb{\tilde{P}}$. According to equation \eqref{05004}, condition \eqref{05016} is equivalent to the following condition
\begin{equation}\label{05017}
\tilde{\mathbb{E}}\big[\exp\big\{u_t-\big((i_{2n}-\delta)\tilde{r}_t-C_{k,s_t}\psi_t\big)\big\}\big|\mathcal{H}_{t-1}\big]=i_{2n}.
\end{equation}

It should be noted that condition \eqref{05017} corresponds only to the white noise random process $u_t$. Thus, we need to impose a condition on the white noise process $v_t$ under the risk--neutral probability measure. This condition is fulfilled by $\tilde{\mathbb{E}}[f(v_t)|\mathcal{H}_{t-1}]=\tilde{\theta}_t$ for any Borel function $f:\mathbb{R}\to\mathbb{R}$ and $\mathcal{H}_{t-1}$ measurable any random variable $\tilde{\theta}_t$. Because for any admissible choices of $\tilde{\theta}_t$, condition \eqref{05017} holds, the market is incomplete. But prices of the options are still consistent with the absence of arbitrage. For this reason, to price the options, in this paper, we will use a unique optimal Girsanov kernel process $\theta_t$, which minimizes the variance of a state price density process and relative entropy. According to \citeA{Battulga24a}, the optimal kernel process $\theta_t$ is obtained by
\begin{equation}\label{05018}
\theta_t=\Theta_t \bigg((i_{2n}-\delta)\tilde{r}_t-C_{k,s_t}\psi_t-\frac{1}{2}\mathcal{D}[\Sigma_{uu,s_t}]\bigg),
\end{equation}
where $\Theta_t=\big[G_t:(\Sigma_{vu,s_t}\Sigma_{uu,s_t}^{-1})'\big]'$ and for a generic square matrix $O$, $\mathcal{D}[O]$ denotes a vector, consisting of diagonal elements of the matrix $O$. Consequently, system \eqref{05009} can be written by
\begin{equation}\label{05019}
\begin{cases}
\tilde{V}_t=\tilde{\nu}_{V,t}+G_t\tilde{V}_{t-1}+G_ti_{2n} r_{t-1}+G_t\tilde{u}_t\\
F_t\tilde{r}_t=\tilde{\nu}_{r,t}+\tilde{r}_{t-1}+\tilde{v}_t,
\end{cases}~~~\text{for}~t=1,\dots,T
\end{equation}
under the risk--neutral probability measure $\tilde{\mathbb{P}}$, where $\tilde{\nu}_{V,t}:=-(G_t-I_n)\tilde{p}_t-\frac{1}{2}G_t\mathcal{D}[\Sigma_{uu,s_t}]-h_t$ is an $(2n\times 1)$ intercept process of the log market value process $\tilde{V}_t$, $F_t:=1-\Sigma_{vu,s_t}\Sigma_{uu,s_t}^{-1}(i_{2n}-\delta)$ is a $(1\times 1)$ process, and $\tilde{\nu}_{r,t}:=c_{r,s_t}'\psi_t-\Sigma_{vu,s_t}\Sigma_{uu,s_t}^{-1}\big(C_{k,s_t}\psi_t+\frac{1}{2}\mathcal{D}[\Sigma_{uu,s_t}]\big)$ is a $(1\times 1)$ intercept process of the log spot rate process $\tilde{r}_t$. It is worth mentioning that  a joint distribution of a random vector $\tilde{\xi}:=(\tilde{\xi}_1',\dots,\tilde{\xi}_T')'$ with $\tilde{\xi}_t:=(\tilde{u}_t',\tilde{v}_t)'$ equals the joint distribution of the random vector $\xi=(\xi_1',\dots,\xi_T')'$, that is,
\begin{equation}\label{05020}
\tilde{\xi}~|~\mathcal{H}_0\sim \mathcal{N}\big(0,\bar{\Sigma}\big)
\end{equation}
under the risk--neutral probability measure $\mathbb{\tilde{P}}$, see \citeA{Battulga24a}.

System \eqref{05019} can be written in VAR(1) form, namely
\begin{equation}\label{05021}
\tilde{Q}_{0,t}x_t=\tilde{\nu}_t+\tilde{Q}_{1,t}x_{t-1}+\mathsf{G}_t\tilde{\xi}_t
\end{equation} 
under the risk--neutral probability measure $\mathbb{\tilde{P}}$, where $\tilde{\nu}_t:=(\tilde{\nu}_{V,t}',\tilde{\nu}_{r,t})'$, and $\tilde{\xi}_t:=\big(\tilde{u}_t',\tilde{v}_t\big)'$ are intercept process and white noise processes of the VAR(1) process $x_t$, respectively, and
\begin{equation}\label{05022}
\tilde{Q}_{0,t}:=\begin{bmatrix}
I_{2n} & 0 \\
0 & F_t
\end{bmatrix},~~~\tilde{Q}_{1,t}:=\begin{bmatrix}
G_t & G_ti_{2n} \\
0 & 1
\end{bmatrix},~~~\text{and}~~~
\mathsf{G}_t=\begin{bmatrix}
G_t & 0\\
0 & 1
\end{bmatrix}
\end{equation}
are $(\tilde{n}\times \tilde{n})$ coefficient matrices. By repeating equation \eqref{05021}, one gets that for $i=t+1,\dots,T$,
\begin{equation}\label{05023}
x_i=\tilde{\Pi}_{t,i}x_t+\sum_{\beta=t+1}^i\tilde{\Pi}_{\beta,i}\tilde{\nu}_\beta+\sum_{\beta=t+1}^i\tilde{\Pi}_{\beta,i}\mathsf{G}_\beta\tilde{\xi}_\beta,
\end{equation}
where the coefficient matrices are for $\beta=t$,
\begin{equation}\label{05024}
\tilde{\Pi}_{\beta,i}:=\prod_{\alpha=\beta+1}^i\tilde{Q}_{0,\alpha}^{-1}\tilde{Q}_{1,\alpha}=\begin{bmatrix}
\displaystyle\prod_{\alpha=\beta+1}^iG_\alpha & \displaystyle \sum_{\alpha=\beta+1}^i\Bigg(\prod_{j_1=\alpha}^iG_{j_1}\Bigg)i_{2n}\Bigg(\prod_{j_2=\beta+1}^{\alpha-1} F_{j_2}^{-1}\Bigg)\\
\displaystyle 0 & \displaystyle\prod_{\alpha=\beta+1}^i F_\alpha^{-1}
\end{bmatrix}
\end{equation}
for $\beta=t+1,\dots,i-1$ ,
\begin{equation}\label{05024}
\tilde{\Pi}_{\beta,i}:=\Bigg(\prod_{\alpha=\beta+1}^i\tilde{Q}_{0,\alpha}^{-1}\tilde{Q}_{1,\alpha}\Bigg)\tilde{Q}_{0,\beta}^{-1}=\begin{bmatrix}
\displaystyle\prod_{\alpha=\beta+1}^iG_\alpha & \displaystyle \sum_{\alpha=\beta+1}^i\Bigg(\prod_{j_1=\alpha}^iG_{j_1}\Bigg)i_{2n}\Bigg(\prod_{j_2=\beta}^{\alpha-1} F_{j_2}^{-1}\Bigg)\\
\displaystyle 0 & \displaystyle\prod_{\alpha=\beta}^i F_\alpha^{-1}
\end{bmatrix},
\end{equation}
and for $\beta=i$,
\begin{equation}\label{ad001}
\tilde{\Pi}_{\beta,i}:=\tilde{Q}_{0,\beta}^{-1}=\begin{bmatrix}
I_{2n} & 0 \\
0 & F_{\beta}^{-1}
\end{bmatrix}.
\end{equation}
Here for a sequence of generic $(k\times k)$ square matrices $O_1,O_2,\dots$, the products mean that for $v\leq u$, $\prod_{j=v}^uO_j=O_u\dots O_v$ and for $v>u$, $\prod_{j=v}^uO_j=I_k$.

Therefore, conditional on the information $\mathcal{H}_t$, for $i=t+1,\dots,T$, a expectation at time $i$ and a conditional covariance matrix at times $i_1$ and $i_2$ of the process $x_t$ is given by the following equations
\begin{equation}\label{05025}
\tilde{\mu}_{i|t}(\mathcal{H}_t):=\mathbb{\tilde{E}}\big[x_i\big|\mathcal{H}_t\big]=\tilde{\Pi}_{t,i} x_t+\sum_{\beta=t+1}^i\tilde{\Pi}_{\beta,i}\tilde{\nu}_\beta
\end{equation}
and
\begin{equation}\label{05026}
\tilde{\Sigma}_{i_1,i_2|t}(\mathcal{H}_t):=\widetilde{\text{Cov}}\big[x_{i_1},x_{i_2}\big|\mathcal{H}_t\big]=\sum_{\beta=t+1}^{i_1\wedge i_2}\tilde{\Pi}_{\beta,i_1}\mathsf{G}_\beta\Sigma_{s_\beta}\mathsf{G}_\beta\tilde{\Pi}_{\beta,i_2}',
\end{equation}
where $i_1\wedge i_2$ is a minimum of $i_1$ and $i_2$. Consequently, due to equation \eqref{05023}, conditional on the information $\mathcal{H}_t$, a joint distribution of a random vector $\bar{x}_t^c:=(x_{t+1}',\dots,x_T')'$ is
\begin{equation}\label{05027}
\bar{x}_t^c~|~\mathcal{H}_t\sim \mathcal{N}\big(\tilde{\mu}_t^c(\mathcal{H}_t),\tilde{\Sigma}_t^c(\mathcal{H}_t)\big),~~~t=0,\dots,T-1
\end{equation}
under the risk--neutral probability measure $\tilde{\mathbb{P}}$, where $\tilde{\mu}_t^c(\mathcal{H}_t):=\big(\tilde{\mu}_{t+1|t}'(\mathcal{H}_t),\dots,\tilde{\mu}_{T|t}'(\mathcal{H}_t)\big)'$ is a conditional expectation and $\tilde{\Sigma}_t^c(\mathcal{H}_t):=\big(\tilde{\Sigma}_{i_1,i_2|t}(\mathcal{H}_t)\big)_{i_1,i_2=t+1}^T$ is a conditional covariance matrix of the random vector $\bar{x}_t^c$ and are calculated by equations \eqref{05025} and \eqref{05026}, respectively. 

\subsection{Forward Probability Measure}

According to \citeA{Geman95}, cleaver change of probability measure leads to a significant reduction in the computational burden of derivative pricing. The frequently used probability measure that reduces the computational burden is the forward probability measure and to price the zero--coupon bond, the European options, and the Margrabe exchange options we will apply it. To define the forward probability measure, we need to zero--coupon bond. It is the well--known fact that conditional on $\mathcal{H}_t$, price of zero--coupon bond paying face value 1 at time $t$ is $B_t(\mathcal{H}_t):=\frac{1}{D_t}\mathbb{\tilde{E}}\big[D_T\big|\mathcal{H}_t\big]$. The $t$--forward probability measure is defined by
\begin{equation}\label{05028}
\mathbb{\hat{P}}_t\big[A\big|\mathcal{H}_t\big]:=\frac{1}{D_tB_t(\mathcal{H}_t)}\int_AD_T\mathbb{\tilde{P}}\big[\omega|\mathcal{H}_t\big]~~~\text{for all}~A\in \mathcal{H}_T.
\end{equation}
Therefore, a negative exponent of $D_T/D_t$ in the zero--coupon bond formula is represented by 
\begin{equation}\label{05029}
\sum_{\beta=t+1}^T\tilde{r}_\beta=\tilde{r}_{t+1}+j_r'\Bigg[\sum_{\beta=t+1}^{T-1}J_{\beta|t}\Bigg]\bar{x}_t^c=\tilde{r}_{t+1}+\gamma_t'\bar{x}_t^c
\end{equation}
where $j_r:=(0,1)'$ is $(1\times \tilde{n})$ vector and it can be used to extract the log spot rate process $\tilde{r}_s$ from the random process $x_s$, $J_{\beta|t}:=[0:I_{\tilde{n}}:0]$ is $(\tilde{n}\times \tilde{n}(T-t))$ matrix, whose $(\beta-t)$--th block matrix equals $I_{\tilde{n}}$ and others are zero and it is used to extract the random vector $x_\beta$ from the random vector $\bar{x}_t^c$, and $\gamma_t':=j_r'\sum_{\beta=t+1}^{T-1}J_{\beta|t}$. Therefore, two times of negative exponent of the price at time $t$ of the zero--coupon bond $B_t(\mathcal{H}_t)$ is represented by
\begin{eqnarray}\label{05030}
&&2\sum_{s=t+1}^T\tilde{r}_s+\big(\bar{x}_t^c-\tilde{\mu}_t^c(\mathcal{H}_t)\big)'\big(\tilde{\Sigma}_t^c(\mathcal{H}_t)\big)^{-1}\big(\bar{x}_t^c-\tilde{\mu}_t^c(\mathcal{H}_t)\big)\nonumber\\
&&=\Big(\bar{x}_t^c-\tilde{\mu}_t^c(\mathcal{H}_t)+\tilde{\Sigma}_t^c(\mathcal{H}_t)\gamma_t\Big)'\big(\tilde{\Sigma}_t^c(\mathcal{H}_t)\big)^{-1}\Big(\bar{x}_t^c-\tilde{\mu}_t^c(\mathcal{H}_t)+\tilde{\Sigma}_t^c(\mathcal{H}_t)\gamma_t\Big)\\
&&+2\big(\tilde{r}_{t+1}+\gamma_t'\tilde{\mu}_t^c(\mathcal{H}_t)\big)-\gamma_t'\tilde{\Sigma}_t^c(\mathcal{H}_t)\gamma_t.\nonumber
\end{eqnarray}
As a result, for given $\mathcal{H}_t$, price at time $t$ of the zero--coupon $B_t(\mathcal{H}_t)$ is
\begin{equation}\label{05031}
B_t(\mathcal{H}_t)=\exp\bigg\{-\tilde{r}_{t+1}-\gamma_t'\tilde{\mu}_t^c(\mathcal{H}_t)+\frac{1}{2}\gamma_t'\tilde{\Sigma}_t^c(\mathcal{H}_t)\gamma_t\bigg\}.
\end{equation}
Consequently, conditional on the information $\mathcal{H}_t$, a joint distribution of the random vector $\bar{x}_t^c$ is given by
\begin{equation}\label{05032}
\bar{x}_t^c~|~\mathcal{H}_t\sim \mathcal{N}\big(\hat{\mu}_t^c(\mathcal{H}_t),\tilde{\Sigma}_t^c(\mathcal{H}_t)\big),~~~t=0,\dots,T-1
\end{equation}
under the $t$--forward probability measure $\hat{\mathbb{P}}_t$, where $\hat{\mu}_t^c(\mathcal{H}_t):=\tilde{\mu}_t^c(\mathcal{H}_t)-\tilde{\Sigma}_t^c(\mathcal{H}_t)\gamma_t$ and $\tilde{\Sigma}_t^c(\mathcal{H}_t)$ are conditional expectation and conditional covariance matrix, respectively, of the random vector $\bar{x}_t^c$. Also, as  $J_{s_1|t}\tilde{\Sigma}_t^c(\mathcal{H}_t) J_{s_2|t}'=\tilde{\Sigma}_{s_1,s_2|t}(\mathcal{H}_t)$, we have
\begin{equation}\label{05033}
J_{s|t}\tilde{\Sigma}_t^c(\mathcal{H}_t)\Bigg(\sum_{\beta=t+1}^{T-1}J_{\beta|t}'\Bigg)=\sum_{\beta=t+1}^{T-1}\tilde{\Sigma}_{s,\beta|t}(\mathcal{H}_t),
\end{equation}
where $\tilde{\Sigma}_{s,\beta|t}(\mathcal{H}_t)$ is calculated by equation \eqref{05026}. Therefore, $(s-t)$--th sub vector of the conditional expectation $\hat{\mu}_t^c(\mathcal{H}_t)$ is given by
\begin{equation}\label{05034}
\hat{\mu}_{s|t}(\mathcal{H}_t):=J_{s|t}\hat{\mu}_t^c(\mathcal{H}_t)=\tilde{\mu}_{s|t}(\mathcal{H}_t)-\sum_{\beta=t+1}^{T-1}\big(\tilde{\Sigma}_{s,\beta|t}(\mathcal{H}_t)\big)_{\tilde{n}},
\end{equation}
where for a generic matrix $O$, we denote its $j$--th column by $(O)_j$. Similarly, it is clear that price at time $t$ of the zero--coupon bond is given by
\begin{equation}\label{05037}
B_t(\mathcal{H}_t)=\exp\bigg\{-\tilde{r}_{t+1}-\sum_{\beta=t+1}^{T-1}\big(\tilde{\mu}_{\beta|t}(\mathcal{H}_t)\big)_{\tilde{n}}+\frac{1}{2}\sum_{\alpha=t+1}^{T-1}\sum_{\beta=t+1}^{T-1}\big(\tilde{\Sigma}_{\alpha,\beta|t}(\mathcal{H}_t)\big)_{\tilde{n},\tilde{n}}\bigg\}.
\end{equation}
where for a generic vector $o$, we denote its $j$--th element by $(o)_j$, and for a generic square matrix $O$, we denote its $(i,j)$--th element by $(O)_{i,j}$.

To price the European call and put options for asset value, we need a distribution of the log market value process at time $T$. For this reason, it follows from equation \eqref{05032} that the distribution of the log market value process at time $T$ is given by 
\begin{equation}\label{05038}
\tilde{V}_T~|~\mathcal{H}_t\sim \mathcal{N}\big(\hat{\mu}_{T|t}^{\tilde{V}}(\mathcal{H}_t),\tilde{\Sigma}_{T|t}^{\tilde{V}}(\mathcal{H}_t)\big),
\end{equation}
under the $t$--forward probability measure $\mathbb{\hat{P}}$, where $\hat{\mu}_{T|t}^{\tilde{V}}(\mathcal{H}_t):=J_V\hat{\mu}_{T|t}(\mathcal{H}_t)$ is a conditional expectation, which is calculated from equation \eqref{05034} and $\tilde{\Sigma}_{T|t}^{\tilde{V}}(\mathcal{H}_t):=J_V\tilde{\Sigma}_{T,T|t}^{\tilde{V}}(\mathcal{H}_t)J_V'$ is a conditional covariance matrix, which is calculated from equation \eqref{05026} of the log market value at time $T$ given the information $\mathcal{H}_t$ and $J_V:=[I_{2n}:0]$ is a ($2n\times \tilde{n}$) matrix, which is used to extract the log market value process $\tilde{V}_t$ from the process $x_t$.

\subsection{The European Call and Put Options}

Let us assume that the recovery rates, corresponding to the market values of assets of the companies are zero when they default. Then, because market values at time $T$ of the equities and liabilities are given by the following equations
$$V_T^e=\max(V_T^a-L,0)=(V_T^a-L)^+~~~\text{and}~~~L_T=\min(V_T^a,L)=L-(L-V_T^a)^+,$$
respectively, where $L$ is a nominal value vector of the liabilities at maturity $T$ of the companies and for a real vector $x$, $x^+$ denotes component--wise maximum of $x$ and zero. Therefore, a risk--neutral equity value at time $t$ of a public company equals the European call option on its asset and liabilities at time $t$ of the company is represented in terms of the European put option on its asset. This subsection is devoted to price the call and put options.

According to equation \eqref{05010} and \eqref{05038}, conditional on the information $\mathcal{H}_t$, its distribution is given by
\begin{equation}\label{05039}
\tilde{V}_T^a~|~\mathcal{H}_t\sim \mathcal{N}\big(\hat{\mu}_{T|t}^a(\mathcal{H}_t),\tilde{\Sigma}_{T|t}^{a}(\mathcal{H}_t)\big)
\end{equation}
under the $t$--forward probability measure $\hat{\mathbb{P}}_t$, where $\hat{\mu}_{T|t}^a(\mathcal{H}_t):=W_T^a\hat{\mu}_{T|t}^{\tilde{V}}(\mathcal{H}_t)+(G_T^a)^{-1}h_T^a$ and $\tilde{\Sigma}_{T|t}^a(\mathcal{H}_t):=W_T^a\tilde{\Sigma}_{T|t}^{\tilde{V}}(\mathcal{H}_t)(W_T^a)'$ are conditional mean and covariance matrix of the log asset value $\tilde{V}_T^a$, respectively, given the information $\mathcal{H}_t$. Therefore, due to equation \eqref{05039} and Lemma \ref{lem01}, see Technical Annex, price vectors at time $t$ of the Black--Scholes call and put options with strike price vector $L$ and maturity $T$ are given by
\begin{eqnarray}\label{05040}
C_{T|t}(\mathcal{H}_t)&=&\mathbb{\tilde{E}}\bigg[\frac{D_T}{D_t}\Big(V_{T}^a-L\Big)^+\bigg|\mathcal{H}_t\bigg]=B_t(\mathcal{H}_t)\mathbb{\hat{E}}\big[\big(V_T^a-L\big)^+\big|\mathcal{H}_t\big]\\
&=&B_t(\mathcal{H}_t)\bigg(\exp\bigg\{\hat{\mu}_{T|t}^a(\mathcal{H}_t)+\frac{1}{2}\mathcal{D}\big[\tilde{\Sigma}_{T|t}^a(\mathcal{H}_t)\big]\bigg\}\odot\Phi\big(d_{T|t}^1(\mathcal{H}_t)\big)-L\odot\Phi\big(d_{T|t}^2(\mathcal{H}_t)\big)\bigg)\nonumber
\end{eqnarray}
and
\begin{eqnarray}\label{05041}
P_{T|t}(\mathcal{H}_t)&=&\mathbb{\tilde{E}}\bigg[\frac{D_T}{D_t}\Big(L-V_T^a\Big)^+\bigg|\mathcal{H}_t\bigg]=B_t(\mathcal{H}_t)\mathbb{\hat{E}}\big[\big(L-V_T^a\big)^+\big|\mathcal{H}_t\big]\\
&=&B_t(\mathcal{H}_t)\bigg(L\odot\Phi\big(-d_{T|t}^2(\mathcal{H}_t)\big)-\exp\bigg\{\hat{\mu}_{T|t}^a(\mathcal{H}_t)+\frac{1}{2}\mathcal{D}\big[\tilde{\Sigma}_{T|t}^a(\mathcal{H}_t)\big]\bigg\}\odot\Phi\big(-d_{T|t}^1(\mathcal{H}_t)\big)\bigg),\nonumber
\end{eqnarray}
where $d_{T|t}^1(\mathcal{H}_t):=\Big(\hat{\mu}_{T|t}^a(\mathcal{H}_t)+\mathcal{D}\big[\tilde{\Sigma}_{T|t}^a(\mathcal{H}_t)\big]-\ln(L)\Big)\oslash\sqrt{\mathcal{D}\big[\tilde{\Sigma}_{T|t}^a(\mathcal{H}_t)\big]}$ and $d_{T|t}^2(\mathcal{H}_t):=d_{T|t}^1(\mathcal{H}_t)-\sqrt{\mathcal{D}\big[\tilde{\Sigma}_{T|t}^a(\mathcal{H}_t)\big]}$. 

Therefore, due to Lemma \ref{lem03} and the tower property of conditional expectation, price vectors at time $t$ ($t=0,\dots,T-1$) of the Black--Scholes call and put options on asset values with strike price vector $L$ and maturity $T$ are obtained as
\begin{equation}\label{05044}
C_{T|t}(\mathcal{F}_t)=\tilde{\mathbb{E}}\bigg[\frac{D_T}{D_t}\Big(V_{T}^a-L\Big)^+\bigg|\mathcal{F}_t\bigg]=\sum_{s}\int_{C_{\hat{s}},\Sigma_{\hat{s}}}C_{T|t}(\mathcal{H}_t)\tilde{f}(C_{\hat{s}},\Sigma_{\hat{s}},s|\mathcal{F}_t)dC_{\hat{s}} d\Sigma_{\hat{s}}
\end{equation}
and
\begin{equation}\label{05045}
P_{T|t}(\mathcal{F}_t)=\tilde{\mathbb{E}}\bigg[\frac{D_T}{D_t}\Big(L-V_{T}^a\Big)^+\bigg|\mathcal{F}_t\bigg]=\sum_{s}\int_{C_{\hat{s}},\Sigma_{\hat{s}}}P_{T|t}(\mathcal{H}_t)\tilde{f}(C_{\hat{s}},\Sigma_{\hat{s}},s|\mathcal{F}_t)dC_{\hat{s}} d\Sigma_{\hat{s}}
\end{equation}
respectively. 

As a result, according to formulas of the call and put options given in equations \eqref{05044} and \eqref{05045}, risk--neutral market values of the equities and liabilities at time $t$ of the companies are given by
\begin{equation}\label{05046}
\hat{V}_t^e=C_{T|t}(\mathcal{F}_t)~~~\text{and}~~~\hat{L}_t=LB_t(\mathcal{F}_t)-P_{T|t}(\mathcal{F}_t).
\end{equation}

\subsection{Default Probability}

Now, we move to default probabilities of the companies. In order to obtain the default probabilities of the companies, for given the information $\mathcal{H}_t$, we need a distribution of log asset value at time $T$ under the real probability measure $\mathbb{P}$. For this reason, let us write system \eqref{05009} in VAR(1) form
\begin{equation}\label{05047}
Q_{0,t}x_t=\nu_t+Q_{1,t}x_{t-1}+\mathsf{G}_t\xi_t
\end{equation} 
under the real probability measure $\mathbb{P}$, where $\nu_t:=(\nu_{V,t}',\nu_{r,t})'$ is intercept process of the VAR(1) process $x_t$ and
\begin{equation}\label{05048}
Q_{0,t}:=\begin{bmatrix}
I_{2n} & -G_t\delta \\
0 & 1
\end{bmatrix}~~~\text{and}~~~
Q_{1,t}:=\begin{bmatrix}
G_t & 0 \\
0 & 1
\end{bmatrix}
\end{equation}
are $(\tilde{n}\times \tilde{n})$ coefficient matrix. By repeating equation \eqref{05047}, one gets that for $i=t+1,\dots,T$,
\begin{equation}\label{05049}
x_i=\Pi_{t,i}x_t+\sum_{\beta=t+1}^i\Pi_{\beta,i}\nu_\beta+\sum_{\beta=t+1}^i\Pi_{\beta,i}\mathsf{G}_\beta\xi_\beta,
\end{equation}
where the coefficient matrices are for $\beta=t$,
\begin{equation}\label{05050}
\Pi_{\beta,i}:=\prod_{\alpha=\beta+1}^iQ_{0,\alpha}^{-1}Q_{1,\alpha}=\begin{bmatrix}
\displaystyle\prod_{\alpha=\beta+1}^iG_\alpha & \displaystyle \sum_{\alpha=\beta+1}^i\Bigg(\prod_{j=\alpha}^iG_j\Bigg)\delta\\
\displaystyle 0 & 1
\end{bmatrix}
\end{equation}
for $\beta=t+1,\dots,i-1$,
\begin{equation}\label{05050}
\Pi_{\beta,i}:=\Bigg(\prod_{\alpha=\beta+1}^iQ_{0,\alpha}^{-1}Q_{1,\alpha}\Bigg)Q_{0,\beta}^{-1}=\begin{bmatrix}
\displaystyle\prod_{\alpha=\beta+1}^iG_\alpha & \displaystyle \sum_{\alpha=\beta}^i\Bigg(\prod_{j=\alpha}^iG_j\Bigg)\delta\\
\displaystyle 0 & 1
\end{bmatrix},
\end{equation} 
and for $\beta=i$,
\begin{equation}\label{•}
\Pi_{\beta,i}:=Q_{0,\beta}^{-1}=\begin{bmatrix}
I_{2n} & G_\beta\delta \\
0 & 1
\end{bmatrix}.
\end{equation}
Thus, conditional on the information $\mathcal{H}_t$, for $i=t+1,\dots,T$, a expectation at time $i$ and a conditional covariance matrix at time $i_1$ and $i_2$ of the process $x_t$ is given by the following equations
\begin{equation}\label{05051}
\mu_{i|t}(\mathcal{H}_t):=\mathbb{E}\big[x_i\big|\mathcal{H}_t\big]=\Pi_{t,i} x_t+\sum_{\beta=t+1}^i\Pi_{\beta,i}\nu_\beta
\end{equation}
and
\begin{equation}\label{05052}
\Sigma_{i_1,i_2|t}(\mathcal{H}_t):=\text{Cov}\big[x_{i_1},x_{i_2}\big|\mathcal{H}_t\big]=\sum_{\beta=t+1}^{i_1\wedge i_2}\Pi_{\beta,i_1}\mathsf{G}_\beta\Sigma_{s_\beta}\mathsf{G}_\beta\Pi_{\beta,i_2}'.
\end{equation}

Therefore, it follows from equations \eqref{05051} and \eqref{05052} that conditional on the information $\mathcal{H}_t$, an expectation and covariance matrix of log market value process at time $T$ under the real probability measure $\mathbb{P}$ are given by the following equations
\begin{equation}\label{05053}
\mu_{T|t}^{\tilde{V}}(\mathcal{H}_t):=\mathbb{E}\big[\tilde{V}_T\big|\mathcal{H}_t\big]=J_V\mu_{T|t}(\mathcal{H}_t)
\end{equation}
and
\begin{eqnarray}\label{05054}
\Sigma_{T|t}^{\tilde{V}}(\mathcal{H}_t)&:=&\text{Var}\big[\tilde{V}_T\big|\mathcal{H}_t\big]=J_V\Sigma_{T,T|t}(\mathcal{H}_t)J_V'.
\end{eqnarray}
Consequently, due to equation \eqref{05010}, conditional on $\mathcal{H}_t$, a distribution of the log asset value process at time $T$ is given by
\begin{equation}\label{05055}
\tilde{V}_T^a~|~\mathcal{H}_t\sim \mathcal{N}\Big(\mu_{T|t}^a(\mathcal{H}_t),\Sigma_{T|t}^a(\mathcal{H}_t)\Big)
\end{equation}
under the real probability measure $\mathbb{P}$, where $\mu_{T|t}^a(\mathcal{H}_t):=W_T^a\mu_{T|t}^{\tilde{V}}(\mathcal{H}_t)+(G_T^a)^{-1}h_T^a$ and $\Sigma_{T|t}^a(\mathcal{H}_t):=W_T^a\Sigma_{T|t}^{\tilde{V}}(\mathcal{H}_t)(W_T^a)'$ are conditional mean and covariance matrix of the log asset value $\tilde{V}_T^a$, respectively, given the information $\mathcal{H}_t$.  According to the structural model of default risk, if the asset value of a company falls below the default threshold, representing liabilities, then default occurs. Therefore, due to equation \eqref{05055}, conditional on the information $\mathcal{H}_t$, a joint default probability at time $t$ of the companies is given by the following equation
\begin{equation}\label{05056}
\mathbb{P}\big[V_T^a\leq \bar{L}|\mathcal{H}_t\big]=\mathbb{P}\big[\tilde{V}_T^a\leq \ln(\bar{L})|\mathcal{H}_t\big]=\Phi_n\left(\big(\Sigma_{T|t}^a(\mathcal{H}_t)\big)^{-1}\big(\ln(\bar{L})-\mu_{T|t}^a(\mathcal{H}_t)\big)\right),
\end{equation}
where $\bar{L}$ is the default threshold vector at maturity $T$ and for a random vector $Z\sim\mathcal{N}(0,I_n)$, $\Phi_n(z):=\mathbb{P}(Z\leq z)$ is a joint distribution function of the random vector $Z$. As a result, by the tower property of conditional expectation formula and Lemma \ref{lem04}, we get that
\begin{equation}\label{05057}
\mathbb{P}\big[V_T^a\leq \bar{L}|\mathcal{F}_t\big]=\sum_{s}\int_{C_{\hat{s}},\Sigma_{\hat{s}}}\mathbb{P}\big[V_T^a\leq \bar{L}|\mathcal{H}_t\big]f(C_{\hat{s}},\Sigma_{\hat{s}},s|\mathcal{F}_t)dC_{\hat{s}} d\Sigma_{\hat{s}}.
\end{equation}

\section{Parameter Estimation}

To estimate parameters of the required rate of return $\tilde{k}_t$, \citeA{Battulga23b} used the maximum likelihood method and Kalman filtering. For Bayesian method, which removes duplication in regime vector, we refer to \citeA{Battulga24g}. In this section, we assume that coefficient matrices $C_1,\dots,C_N$, covariance matrices $\Sigma_1,\dots,\Sigma_N$, and transition probability matrix $\mathsf{P}$ are deterministic. Here we apply the EM algorithm to estimate parameters of the model. If we combine the equations \eqref{05006} and \eqref{05008}, then we have that
\begin{equation}
B_0y_t=C_{s_t}\psi_t+B_1y_{t-1}+\xi_t,
\end{equation}
where $y_t:=(\tilde{k}_t',\tilde{r}_t)'$ is an $(\tilde{n}\times 1)$ vector of endogenous variables, $C_{s_t}$ is the $(\tilde{n}\times l)$ matrix, which depends on the regime $s_t$, and the $(\tilde{n}\times \tilde{n})$ matrices are given by 
\begin{equation}\label{•}
B_0:=\begin{bmatrix}
I_{2n} & -\delta\\
0 & 1
\end{bmatrix}~~~\text{and}~~~
B_1:=\begin{bmatrix}
0 & 0\\
0 & 1
\end{bmatrix}.
\end{equation}
For $t=0,\dots,T$, let $\mathcal{Y}_t$ be the available data at time $t$, which is used to estimate parameters of the model, that is, $\mathcal{Y}_t:=\sigma(y_0,y_1,\dots,y_t).$ Then, it is clear that the log--likelihood function of our model is given by the following equation
\begin{equation}\label{02018}
\mathcal{L}(\theta)=\sum_{t=1}^T\ln\big(f(y_t|\mathcal{Y}_{t-1};\theta)\big)
\end{equation}
where $\theta:=\big(\text{vec}(C_1)',\dots,\text{vec}(C_N)',\text{vec}(\Sigma_1)',\dots,\text{vec}(\Sigma_N)',\text{vec}(\mathsf{P})'\big)'$ is a vector, which consists of all population parameters of the model and $f(y_t|\mathcal{Y}_{t-1};\theta)$ is a conditional density function of the random vector $y_t$ given the information $\mathcal{Y}_{t-1}$.  The log--likelihood function is used to obtain the maximum likelihood estimator of the parameter vector $\theta$. Note that the log--likelihood function depends on all observations, which are collected in $\mathcal{Y}_T$, but does not depend on regime--switching process $s_t$, whose values are unobserved. If we assume that the regime--switching process in regime $j$ at time $t$, then because conditional on the information $\mathcal{Y}_{t-1}$, $\xi_t$ follows a multivariate normal distribution with mean zero and covariance matrix $\Sigma_j$, the conditional density function of the random vector $y_t$ is given by the following equation
\begin{eqnarray}\label{02019}
\eta_{t,j}&:=&f(y_t|s_t=j,\mathcal{Y}_{t-1};\alpha)\\
&=&\frac{1}{(2\pi)^{\tilde{n}/2}|\Sigma_j|^{1/2}}\exp\bigg\{-\frac{1}{2}\Big(B_0y_t-C_j\psi_t-B_1y_{t-1}\Big)'\Sigma_j^{-1}\Big(B_0y_t-C_j\psi_t-B_1y_{t-1}\Big)\bigg\}\nonumber
\end{eqnarray}
for $t=1,\dots,T$ and $j=1,\dots,N$, where $\alpha:=\big(\text{vec}(C_1)',\dots,\text{vec}(C_N)',\text{vec}(\Sigma_1)',\dots,\text{vec}(\Sigma_N)'\big)'$ is a parameter vector, which differs from the vector of all parameters $\theta$ by the transition probability matrix $\mathsf{P}$. For all $t=1,\dots,T$, we collect the conditional density functions of the price at time $t$ into an $(N\times 1)$ vector $\eta_t$, that is, $\eta_t:=(\eta_{t,1},\dots,\eta_{t,N})'$. 

Let us denote a probabilistic inference about the value of the regime--switching process $s_t$ is equal to $j$, based on the information $\mathcal{Y}_t$ and the parameter vector $\theta$ by $\mathbb{P}(s_t=j|\mathcal{Y}_t,\theta)$. Collect these conditional probabilities $\mathbb{P}(s_t=j|\mathcal{Y}_t,\theta)$ for $j=1,\dots,N$ into an $(N\times 1)$ vector $z_{t|t}$, that is, $z_{t|t}:=\big(\mathbb{P}(s_t=1|\mathcal{Y}_t;\theta),\dots,\mathbb{P}(s_t=N|\mathcal{Y}_t;\theta)\big)'$. Also, we need a probabilistic forecast about the value of the regime--switching process at time $t+1$ is equal to $j$ conditional on data up to and including time $t$. Collect these forecasts into an $(N\times 1)$ vector $z_{t+1|t}$, that is, $z_{t+1|t}:=\big(\mathbb{P}(s_{t+1}=1|\mathcal{Y}_t;\theta),\dots,\mathbb{P}(s_{t+1}=N|\mathcal{Y}_t;\theta)\big)'$.  

The probabilistic inference and forecast for each time $t=1,\dots,T$ can be found by iterating on the following pair of equations: 
\begin{equation}\label{02021}
z_{t|t}=\frac{(z_{t|t-1}\odot\eta_t)}{i_N'(z_{t|t-1}\odot\eta_t)}~~~\text{and}~~~z_{t+1|t}=\hat{\mathsf{P}}'z_{t|t},~~~t=1,\dots,T,
\end{equation}
see book of \citeA{Hamilton94}, where $\eta_t$ is the $(N\times 1)$ vector, whose $j$-th element is given by equation \eqref{02019}, $\hat{\mathsf{P}}$ is the $(N\times N)$ transition probability matrix, which is defined by omitting the first row of the matrix $\mathsf{P}$, and $i_N$ is an $(N\times 1)$ vector, whose elements equal 1. Given a starting value $z_{1|0}$ and an assumed value for the population parameter vector $\theta$, one can iterate on \eqref{02021} for $t=1,\dots,T$ to calculate the values of $z_{t|t}$ and $z_{t+1|t}$. 

To obtain MLE of the population parameters, in addition to the inferences and forecasts we need a smoothed inference about the regime--switching process at time $t$ is equal to $j$ based on full information $\mathcal{Y}_T$. Collect these smoothed inferences into an $(N\times 1)$ vector $z_{t|T}$, that is, $z_{t|T}:=\big(\mathbb{P}(s_t=1|\mathcal{Y}_T;\theta),\dots,\mathbb{P}(s_t=N|\mathcal{Y}_T;\theta)\big)'$. The smoothed inferences can be obtained by using the \citeA{Battulga24g}'s exact smoothing algorithm:
\begin{equation}\label{08110}
z_{T-1|T}=\frac{1}{i_N'(z_{T|T-1}\odot\eta_t)}\big(\hat{\mathsf{P}}\mathsf{H}_Ti_N\big)\odot z_{T-1|T-1}
\end{equation}
and for $t=T-2,\dots,1$,
\begin{equation}\label{08180}
z_{t|T}=\frac{1}{i_N'(z_{t+1|t}\odot\eta_{t+1})}\Big(\hat{\mathsf{P}}\mathsf{H}_{t+1}\big(z_{t+1|T}\oslash z_{t+1|t+1}\big)\Big)\odot z_{t|t},
\end{equation}
where $\oslash$ is an element--wise division of two vectors and $\mathsf{H}_{t+1}:=\mathrm{diag}\{\eta_{t+1,1},\dots,\eta_{t+1,N}\}$ is an $(N\times N)$ diagonal matrix. For $t=2,\dots,T$, joint probability of the regimes $s_{t-1}$ and $s_t$ is
\begin{equation}\label{08165}
\mathbb{P}(s_{t-1}=i,s_t=j|\mathcal{F}_t;\theta)=\frac{(z_{t|T})_j\eta_{t,j}p_{s_{t-1}s_t}(z_{t-1|t-1})_i}{(z_{t|t})_ji_N'(z_{t|t-1}\odot\eta_t)},
\end{equation}
where for a generic vector $o$, $(o)_j$ denotes $j$--th element of the vector $o$.

The EM algorithm is an iterative method to obtain (local) maximum likelihood estimate of parameters of distribution functions, which depend on unobserved (latent) variables. The EM algorithm alternates an expectation (E) step and a maximization (M) step. In E--Step, we consider that conditional on the full information $\mathcal{Y}_T$ and parameter at iteration $k$, $\theta^{[k]}$, expectation of augmented log--likelihood of the data $\mathcal{Y}_T$ and unobserved (latent) transition probability matrix $\mathsf{P}$. The E--Step defines a objective function $\mathcal{L}$,
namely,
\begin{eqnarray}
\mathcal{L}&=&\mathbb{E}\bigg[-\frac{T\tilde{n}}{2}\ln(2\pi)-\frac{1}{2}\sum_{t=1}^T\sum_{j=1}^N\ln(\Sigma_j)1_{\{s_t=j\}}\nonumber\\
&-&\frac{1}{2}\sum_{t=1}^T\sum_{j=1}^N\Big(B_0y_t-C_j\psi_t-B_1y_{t-1}\Big)'\Sigma_j^{-1}\Big(B_0y_t-C_j\psi_t-B_1y_{t-1}\Big)1_{\{s_t=j\}}\\
&+&\sum_{j=1}^Np_{0j}1_{\{s_1=j\}}+\sum_{t=2}^T\sum_{i=1}^N\sum_{j=1}^N\ln(p_{ij})1_{\{s_{t-1}=i,s_t=j\}}-\sum_{i=0}^N\mu_i\bigg(\sum_{j=1}^Np_{ij}-1\bigg)\bigg|\mathcal{Y}_T;\theta^{[k]}\bigg]\nonumber
\end{eqnarray}
In M--Step, to obtain parameter estimate of next iteration $\theta^{[k+1]}$, one maximizes the objective function with respect to the parameter $\theta$. First, let us consider partial derivative from the objective function with respect to the parameter $C_j$ for $j=1,\dots,N$. Let $c_j$ is a vectorization of the matrix $C_j$, i.e., $c_j=\text{vec}(C_j)$. Since $C_j\psi_t=(\psi_t'\otimes I_{2n+1})c_j$, we have that
\begin{equation}
\frac{\partial \mathcal{L}}{\partial c_j'}=\sum_{t=1}^T\Big(B_0y_t-\big(\psi_t'\otimes I_{2n+1}\big)c_j-B_1 y_{t-1}\Big)'\Sigma_j^{-1}\big(\psi_t'\otimes I_{2n+1}\big)\big(z_{t|T}^{[k]}\big)_j.
\end{equation}
Consequently, an estimator at iteration $(k+1)$ of the parameter $c_j$ is given by
\begin{eqnarray}\label{•}
c_j^{[k+1]}&=&\bigg(\sum_{t=1}^T\big(\psi_t\otimes I_{2n+1}\big)\Sigma_j^{-1}\big(\psi_t\otimes I_{2n+1}\big)\big(z_{t|T}^{[k]}\big)_j\bigg)^{-1}\nonumber\\
&\times&\sum_{t=1}^T\big(\psi_t\otimes I_{2n+1}\big)\Sigma_j^{-1}\big(B_0y_t-B_1y_{t-1}\big)\big(z_{t|T}^{[k]}\big)_j.
\end{eqnarray}
As a result, an estimator at iteration $(k+1)$ of the parameter $C_j$ is given by
\begin{equation}
C_j^{[k+1]}=\big(B_0\bar{y}_j^{[k]}-B_1\bar{y}_{j,-1}^{[k]}\big)\big(\bar{\psi}_j^{[k]}\big)'\big(\bar{\psi}_j^{[k]}\big(\bar{\psi}_j^{[k]}\big)'\big)^{-1},
\end{equation}
where $\bar{y}_j^{[k]}:=\Big[y_1\sqrt{\big(z_{1|T}^{[k]}\big)_j}:\dots:y_T\sqrt{\big(z_{T|T}^{[k]}\big)_j}\Big]$ is a $(\tilde{n}\times T)$ matrix, $\bar{y}_{j,-1}^{[k]}:=\Big[y_0\sqrt{\big(z_{1|T}^{[k]}\big)_j}:\dots:y_{T-1}\sqrt{\big(z_{T|T}^{[k]}\big)_j}\Big]$ is a $(\tilde{n}\times T)$ matrix, and $\bar{\psi}_j^{[k]}:=\Big[\psi_1\sqrt{\big(z_{1|T}^{[k]}\big)_j}:\dots:\psi_T\sqrt{\big(z_{T|T}^{[k]}\big)_j}\Big]$ is an $(l\times T)$ matrix. Second, a partial derivative from the objective function with respect to the parameter $\Sigma_j$ for $j=1,\dots,N$ is given by
\begin{eqnarray}
\frac{\partial \mathcal{L}}{\partial \Sigma_j}&=&-\frac{1}{2}\Sigma_j^{-1}\sum_{t=1}^T\big(z_{t|T}^{[k]}\big)_j\nonumber\\
&+&\frac{1}{2}\sum_{t=1}^T\Sigma_j^{-1}\big(y_t-C_j\psi_t-D y_{t-1}\big)\big(y_t-C_j\psi_t-D y_{t-1}\big)'\Sigma_j^{-1}\big(z_{t|T}^{[k]}\big)_j.
\end{eqnarray}
Consequently, an estimator at iteration $(k+1)$ of the parameter $\Sigma_j$ is given by
\begin{equation}
\Sigma_j^{[k+1]}=\frac{1}{\sum_{t=1}^T\big(z_{t|T}^{[k]}\big)_j}\sum_{t=1}^T\big(B_0y_t-C_j^{[k+1]}\psi_t-B_1 y_{t-1}\big)\big(B_0y_t-C_j^{[k+1]}\psi_t-B_1 y_{t-1}\big)'\big(z_{t|T}^{[k]}\big)_j.
\end{equation}
Third, a partial derivative from the objective function with respect to the parameter $p_{ij}$ for $i,j=1,\dots,N$ is given by
\begin{eqnarray}
\frac{\partial \mathcal{L}}{\partial p_{ij}}=\frac{1}{p_{ij}}\sum_{t=2}^T\mathbb{P}\big(s_{t-1}=i,s_t=j|\mathcal{F}_T;\theta^{[k]}\big)-\mu_i.
\end{eqnarray}
Consequently, an estimator at iteration $(k+1)$ of the parameter $p_{ij}$ is given by
\begin{equation}\label{•}
p_{ij}^{[k+1]}=\frac{1}{\sum_{t=2}^T\big(z_{t|T}^{[k]}\big)_i}\sum_{t=2}^T\mathbb{P}\big(s_{t-1}=i,s_t=j|\mathcal{F}_T;\theta^{[k]}\big)
\end{equation}
where the joint probability $\mathbb{P}\big(s_{t-1}=i,s_t=j|\mathcal{F}_T;\theta^{[k]}\big)$ is calculated by equation \eqref{08165}. Fourth, a partial derivative from the objective function with respect to the parameter $p_{0j}$ for $j=1,\dots,N$ is given by
\begin{eqnarray}
\frac{\partial \mathcal{L}}{\partial p_{0j}}=\frac{1}{p_{0j}}\mathbb{P}\big(s_1=j|\mathcal{F}_T;\theta^{[k]}\big)-\mu_0.
\end{eqnarray}
Consequently, an estimator at iteration $(k+1)$ of the parameter $p_{0j}$ is given by
\begin{equation}
p_{0j}^{[k+1]}=\big(z_{1|T}^{[k]}\big)_j.
\end{equation}
Alternating between these steps, the EM algorithm produces improved parameter estimates at each step (in the sense that the value of the original log--likelihood is continually increased) and it converges to the maximum likelihood (ML) estimates of the parameters.

To use the suggested model, we need to calculate the mean log dividend--to--price ratio $\mu_t$ and the mean log liability value--to--equity value ratio $\mu_t^a$ applying the parameter estimation. According to equations \eqref{05005} and \eqref{05006}, we have that
\begin{equation}\label{06085}
\tilde{p}_t-\tilde{V}_t-h_t=G_t\big(\tilde{p}_{t-1}-\tilde{V}_{t-1}+\tilde{p}_t-\tilde{p}_{t-1}-C_{k,s_t}\psi_t-\delta\tilde{r}_t\big)-G_tu_t.
\end{equation}
By taking expectation with respect to the real probability measure $\mathbb{P}$, one finds that
\begin{equation}\label{06086}
\mu_t-h_t=G_t\big(\mu_{t-1}+\tilde{p}_t-\tilde{p}_{t-1}-\mathbb{E}[C_{k,s_t}|\mathcal{F}_0]\psi_t-\delta\mathbb{E}[\tilde{r}_t|\mathcal{F}_0]\big),
\end{equation}
where according to equation \eqref{05008}, the expectations $\mathbb{E}[C_{k,s_t}|\mathcal{F}_0]$ and $\mathbb{E}[\tilde{r}_t|\mathcal{F}_0]$ equal 
\begin{equation}\label{ad006}
\mathbb{E}[C_{k,s_t}|\mathcal{F}_0]=\sum_{j=1}^NC_{k,j}\mathbb{P}[s_t=j|\mathcal{F}_0]=\sum_{j=1}^NC_{k,j}\big(p_0\hat{\mathsf{P}}^t\big)_j
\end{equation}
and
\begin{equation}\label{•}
\mathbb{E}[\tilde{r}_t|\mathcal{F}_0]=\tilde{r}_1+\sum_{i=2}^t\mathbb{E}[c_{r,s_i}'|\mathcal{F}_0]\psi_i=\tilde{r}_1+\sum_{i=2}^t\bigg(\sum_{j=1}^Nc_{r,j}'\big(p_0\hat{\mathsf{P}}^t\big)_j\bigg)\psi_i,
\end{equation}
respectively. On the other hand, the definition of the linearization parameter $h_t$ implies that
\begin{equation}\label{06087}
\mu_t-h_t=G_t\big(\mu_t-\ln(g_t)\big).
\end{equation}
Therefore, for successive values of the parameter $\mu_t$, it holds
\begin{equation}\label{06088}
\mu_t=\mu_{t-1}+\ln(g_t)+\tilde{p}_t-\tilde{p}_{t-1}-\mathbb{E}[C_{k,s_t}|\mathcal{F}_0]\psi_t-\delta\mathbb{E}[\tilde{r}_t|\mathcal{F}_0],
\end{equation}
where the above recurrence equation's initial value is $\mu_0=\tilde{p}_0-\tilde{V}_0$. One may be solve the above nonlinear system of equations by numerical methods for the parameter $\mu_t$. Here we consider Newton's iteration method to obtain solution of the system of equations. If we substitute equation $g_t=i_n+\exp\{\mu_t\}$ into the above system of equations, then we have that
\begin{equation}\label{•}
M_t(\mu):=\mu-\ln\big(i_n+\exp(\mu)\big)-\mu_{t-1}-\tilde{p}_t+\tilde{p}_{t-1}+\mathbb{E}[C_{k,s_t}|\mathcal{F}_0]\psi_t+\delta\mathbb{E}[\tilde{r}_t|\mathcal{F}_0]=0.
\end{equation}
Since an inverse matrix of Jacobian of the function $M_t(\mu)$ is $J(\mu)^{-1}:=\text{diag}\{i_n+\exp(\mu)\}$, Newton's iteration is given by
\begin{equation}\label{•}
\mu_{j+1,t}=\mu_{j,t}-J(\mu_{j,t})^{-1}M_t(\mu_{j,t}),
\end{equation}
where $\mu_{0,t}$ is an initial guess value of the mean log dividend--to--price ratio $\mu_t$. 

\section{Conclusion}

In this paper, we developed the Merton's structural model for public companies under an assumption that liabilities of the companies are observed. By modeling the market values of equities, liabilities and assets of companies using the \citeauthor{Campbell88}'s \citeyear{Campbell88} approximation method, we obtain formulas for risk--neutral equity and liability values and default probabilities of the companies. Finally, we study ML estimators of suggested model's parameters. It is worth mentioning that following the ideas in \citeA{Battulga24e} one can develop option pricing formulas with default risk and portfolio selection theory with default risk for public companies.

\section{Technical Annex}

Here we give the Lemmas, which are used in the paper.

\begin{lem}\label{lem01}
Let $X\sim \mathcal{N}(\mu,\sigma^2)$. Then for all $K>0$,
\begin{equation*}\label{05085}
\mathbb{E}\big[\big(e^X-K\big)^+\big]=\exp\bigg\{\mu+\frac{\sigma^2}{2}\bigg\}\Phi(d_1)-K\Phi(d_2)
\end{equation*}
and
\begin{equation*}\label{05086}
\mathbb{E}\big[\big(K-e^X\big)^+\big]=K\Phi(-d_2)-\exp\bigg\{\mu+\frac{\sigma^2}{2}\bigg\}\Phi(-d_1),
\end{equation*}
where $d_1:=\big(\mu+\sigma^2-\ln(K)\big)/\sigma$, $d_2:=d_1-\sigma$, and $\Phi(x)=\int_{-\infty}^x\frac{1}{\sqrt{2\pi}}e^{-u^2/2}du$ is the cumulative standard normal distribution function.
\end{lem}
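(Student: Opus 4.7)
The plan is to evaluate the two expectations by direct integration against the Gaussian density, using the standard trick of completing the square, and then to read off the two $\Phi(\cdot)$ terms. I will handle the call expectation in detail and obtain the put formula either by an analogous computation or, more economically, by put--call parity at the level of expectations.

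For the call part, I would write
\begin{equation*}
\mathbb{E}\big[\big(e^X-K\big)^+\big]=\int_{\ln K}^\infty \big(e^x-K\big)\frac{1}{\sigma\sqrt{2\pi}}\exp\!\Big\{-\tfrac{(x-\mu)^2}{2\sigma^2}\Big\}dx,
\end{equation*}
and split this into two pieces. On the ``$-K$'' piece, the substitution $z=(x-\mu)/\sigma$ immediately gives $K\big(1-\Phi((\ln K-\mu)/\sigma)\big)=K\Phi(d_2)$, using the identity $1-\Phi(x)=\Phi(-x)$ together with the definition $d_2=(\mu-\ln K)/\sigma=d_1-\sigma$. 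On the ``$e^x$'' piece, after the same substitution I would combine the exponentials and complete the square via $-z^2/2+\sigma z=-(z-\sigma)^2/2+\sigma^2/2$. Pulling the constant $e^{\mu+\sigma^2/2}$ out of the integral and shifting by $z\mapsto z+\sigma$ turns the remaining integral into $1-\Phi((\ln K-\mu-\sigma^2)/\sigma)=\Phi(d_1)$. Combining the two pieces yields the asserted call formula.

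For the put part, the cleanest route is to observe the elementary identity $(K-e^X)^+=(e^X-K)^+-(e^X-K)$ and take expectations, using $\mathbb{E}[e^X]=\exp\{\mu+\sigma^2/2\}$ (the usual log-normal moment, itself an instance of the completing-the-square computation above). This yields
\begin{equation*}
\mathbb{E}\big[\big(K-e^X\big)^+\big]=\exp\!\Big\{\mu+\tfrac{\sigma^2}{2}\Big\}\Phi(d_1)-K\Phi(d_2)+K-\exp\!\Big\{\mu+\tfrac{\sigma^2}{2}\Big\},
\end{equation*}
and applying $1-\Phi(d_1)=\Phi(-d_1)$ and $1-\Phi(d_2)=\Phi(-d_2)$ collapses this to $K\Phi(-d_2)-\exp\{\mu+\sigma^2/2\}\Phi(-d_1)$, as required.

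There is essentially no obstacle: the proof is a textbook calculation, and the only point demanding a little care is bookkeeping of signs in the arguments of $\Phi$ so that $d_1$ and $d_2$ come out exactly as defined in the statement (in particular, the convention $d_1=(\mu+\sigma^2-\ln K)/\sigma$ places $\ln K$ on the \emph{negative} side, which is why the first substitution produces $\Phi(-\cdot)=1-\Phi(\cdot)$ before the final rewriting). If a more symmetric presentation is desired, one could derive the put directly by the same completing-the-square argument on $\int_{-\infty}^{\ln K}(K-e^x)\varphi_{\mu,\sigma^2}(x)dx$, which gives the same answer without invoking the parity identity.
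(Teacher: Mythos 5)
Your proof is correct: the call formula follows from the standard split of the integral over $\{x>\ln K\}$ with a completed square on the $e^x$ term, and the put formula follows from the parity identity $(K-e^X)^+=(e^X-K)^+-(e^X-K)$ together with $\mathbb{E}[e^X]=\exp\{\mu+\sigma^2/2\}$; the bookkeeping giving $d_1=(\mu+\sigma^2-\ln K)/\sigma$ and $d_2=d_1-\sigma$ checks out. The paper does not prove this lemma itself but merely cites external references, and the cited proofs are this same textbook computation, so your argument is in essence the intended one (and has the merit of being self-contained).
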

\begin{proof}
See, e.g., \citeA{Battulga24a} and \citeA{Battulga24e}.
\end{proof}

Let us denote conditional on a generic $\sigma$-field $\mathcal{O}$, a joint density functions of a generic random vector $X$ by $f(X|\mathcal{O})$ and $\tilde{f}(X|\mathcal{O})$ under $\mathbb{P}$ and $\mathbb{\tilde{P}}$, respectively, and let $\mathcal{J}_t:=\sigma(\bar{C}_t)\vee \sigma(\bar{\Gamma}_t)\vee \sigma(\bar{s}_t)\vee \mathcal{F}_0$. Then, the following Lemmas hold.

\begin{lem}\label{lem03}
Conditional on $\mathcal{F}_t$, a joint density of $\big(\Pi_{\hat{s}},\Sigma_{\hat{s}},s,\mathsf{P}\big)$ is given by
\begin{equation}\label{07042}
\tilde{f}\big(C_{\hat{s}},\Sigma_{\hat{s}},s,\mathsf{P}|\mathcal{F}_t\big)=\frac{\tilde{f}(\bar{y}_t|C_{\alpha},\Sigma_{\alpha},\bar{s}_t,\mathcal{F}_0)f(C_{\hat{s}},\Sigma_{\hat{s}}|\hat{s},\mathcal{F}_0)f(s,\mathsf{P}|\mathcal{F}_0)}{\displaystyle \sum_{\bar{s}_t}\bigg(\int_{C_{\alpha},\Sigma_{\alpha}}\tilde{f}(\bar{y}_t|C_{\alpha},\Sigma_{\alpha},\bar{s}_t,\mathcal{F}_0)f(C_{\alpha},\Sigma_{\alpha}|\alpha,\mathcal{F}_0)dC_\alpha d\Sigma_\alpha\bigg)f(\bar{s}_t|\mathcal{F}_0)}
\end{equation}
for $t=1,\dots,T$, where for $t=1,\dots,T$,
\begin{equation}\label{07043}
\tilde{f}(\bar{y}_t|C_{\alpha},\Sigma_{\alpha},\bar{s}_t,\mathcal{F}_0)=\frac{1}{(2\pi)^{nt/2}|\Sigma_{11}|^{1/2}}\exp\Big\{-\frac{1}{2}\big(\bar{y}_t-\tilde{\mu}_1\big)'\tilde{\Sigma}_{11}^{-1}\big(\bar{y}_t-\tilde{\mu}_1\big)\Big\}
\end{equation}
with $\tilde{\mu}_1:=\big(\tilde{\mu}_{1|0}'(\mathcal{H}_0),\dots,\tilde{\mu}_{t|0}'(\mathcal{H}_0)\big)'$ and $\tilde{\Sigma}_{11}:=\big(\tilde{\Sigma}_{i_1,i_2|0}(\mathcal{H}_0)\big)_{i_1,i_2=1}^t$. In particular, we have that
\begin{equation}\label{ad001}
\tilde{f}\big(C_{\hat{s}},\Sigma_{\hat{s}},s|\mathcal{F}_t\big)=\frac{\tilde{f}(\bar{y}_t|C_{\alpha},\Sigma_{\alpha},\bar{s}_t,\mathcal{F}_0)f(C_{\hat{s}},\Sigma_{\hat{s}}|\hat{s},\mathcal{F}_0)f(s|\mathcal{F}_0)}{\displaystyle \sum_{\bar{s}_t}\bigg(\int_{C_{\alpha},\Sigma_{\alpha}}\tilde{f}(\bar{y}_t|C_{\alpha},\Sigma_{\alpha},\bar{s}_t,\mathcal{F}_0)f(C_{\alpha},\Sigma_{\alpha}|\alpha,\mathcal{F}_0)dC_\alpha d\Sigma_\alpha\bigg)f(\bar{s}_t|\mathcal{F}_0)}
\end{equation}
for $t=1,\dots,T$.
\end{lem}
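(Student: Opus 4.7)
The plan is to apply Bayes' theorem and systematically exploit the independence assumptions stated earlier in the excerpt, together with the density factorisation (08011). First I observe that, since $\mathcal{F}_t = \mathcal{F}_0 \vee \sigma(\bar{y}_t)$, Bayes' rule gives
$$\tilde{f}\big(C_{\hat{s}},\Sigma_{\hat{s}},s,\mathsf{P}\big|\mathcal{F}_t\big) = \frac{\tilde{f}\big(\bar{y}_t\big|C_{\hat{s}},\Sigma_{\hat{s}},s,\mathsf{P},\mathcal{F}_0\big)\,\tilde{f}\big(C_{\hat{s}},\Sigma_{\hat{s}},s,\mathsf{P}\big|\mathcal{F}_0\big)}{\tilde{f}\big(\bar{y}_t\big|\mathcal{F}_0\big)}.$$
Because the first $t$ equations of system (05019) involve only the regimes in $\bar{s}_t$ and the matching distinct coefficient blocks collected in $(C_\alpha,\Sigma_\alpha)$, and because $\mathsf{P}$ enters only through the law of $s$, I reduce the likelihood factor to $\tilde{f}(\bar{y}_t\,|\,C_\alpha,\Sigma_\alpha,\bar{s}_t,\mathcal{F}_0)$; its explicit Gaussian form (07043) then follows by restricting (05027) to the first $t$ coordinates.

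Next I factor the prior. The assumption that, conditional on $\mathcal{F}_0$, the transition matrix $\mathsf{P}$ is independent of $(C_s,\Sigma_s)$ given $s$, combined with the fact that the coefficients depend on $s$ only through $\hat{s}$ and that the joint law of $(\mathsf{P},s,C_s,\Sigma_s)$ is unchanged by the Girsanov reweighting (which acts only on the noise), yields
$$\tilde{f}\big(C_{\hat{s}},\Sigma_{\hat{s}},s,\mathsf{P}\big|\mathcal{F}_0\big) = f\big(C_{\hat{s}},\Sigma_{\hat{s}}\big|\hat{s},\mathcal{F}_0\big)\,f\big(s,\mathsf{P}\big|\mathcal{F}_0\big),$$
which is the numerator of (07042). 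For the denominator I marginalise the numerator in three successive reductions: integrating out $\mathsf{P}$ via $\int f(s,\mathsf{P}|\mathcal{F}_0)\,d\mathsf{P} = f(s|\mathcal{F}_0)$; integrating out $(C_\delta,\Sigma_\delta)$ by applying the block factorisation $f(C_{\hat{s}},\Sigma_{\hat{s}}|\hat{s},\mathcal{F}_0) = f(C_\alpha,\Sigma_\alpha|\alpha,\mathcal{F}_0)\,f_*(C_\delta,\Sigma_\delta|\delta,\mathcal{F}_0)$ from (08011) together with the convention (08012), which yields $1$ after integration; and finally summing over the future regime block via $\sum_{\bar{s}_t^c} f(s|\mathcal{F}_0) = f(\bar{s}_t|\mathcal{F}_0)$. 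Assembling these three reductions reproduces exactly the denominator of (07042). The particular case (ad001) then follows as an immediate corollary by integrating (07042) over $\mathsf{P}$ and applying once more $\int f(s,\mathsf{P}|\mathcal{F}_0)\,d\mathsf{P} = f(s|\mathcal{F}_0)$ in the numerator.

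The main obstacle is not computational but bookkeeping-theoretic: I must argue cleanly that $\bar{y}_t$ is conditionally independent of the future regimes $\bar{s}_t^c$ and of the disjoint coefficient blocks $(C_\delta,\Sigma_\delta)$ given $(C_\alpha,\Sigma_\alpha,\bar{s}_t,\mathcal{F}_0)$, and that $\mathsf{P}$ becomes irrelevant for $\bar{y}_t$ once $s$ is known. These facts rest on the block-wise independence (08010) and the assumption $\mathsf{P}\perp(C_s,\Sigma_s)$ given $(s,\mathcal{F}_0)$; once they are established, the remainder of the argument is a routine Bayes-plus-marginalisation calculation, and the Gaussian identity (07043) is simply a restatement of (05027) restricted to the first $t$ components.
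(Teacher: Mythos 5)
The paper does not actually prove Lemma~\ref{lem03}; it only cites \citeA{Battulga24a}, so there is no in-paper argument to compare against. Your Bayes-plus-marginalisation route is the standard (and almost certainly the intended) one, and the three reductions you perform in the denominator --- integrating out $\mathsf{P}$, integrating out $(C_\delta,\Sigma_\delta)$ via the factorisation \eqref{08011}--\eqref{08012}, and summing over $\bar{s}_t^c$ --- are exactly what is needed, in a valid order, since $\alpha$ depends only on $\bar{s}_t$ while $\delta$ depends on the whole path. Two points deserve one more line each to be airtight. First, the claim that the law of $(C_{\hat{s}},\Sigma_{\hat{s}},s,\mathsf{P})$ given $\mathcal{F}_0$ is the same under $\tilde{\mathbb{P}}$ as under $\mathbb{P}$ is not automatic from ``the kernel acts only on the noise'', because the Girsanov kernel $\theta_t$ in \eqref{05018} explicitly involves $C_{k,s_t}$ and $\Sigma_{uu,s_t}$; the correct justification is that the Radon--Nikodym derivative has unit conditional expectation given $\mathcal{H}_0$, so for any $\mathcal{H}_0$-measurable functional the reweighting integrates out to one. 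Second, you write $\mathcal{F}_t=\mathcal{F}_0\vee\sigma(\bar{y}_t)$ whereas the paper defines $\mathcal{F}_t=\mathcal{F}_0\vee\sigma(\bar{x}_t)$; this is harmless only because, given $\mathcal{F}_0$ (which contains $\psi_1,\dots,\psi_T$ and $\tilde{p}_1,\dots,\tilde{p}_T$), the map between $\bar{x}_t$ and $\bar{y}_t$ is an invertible affine transformation, so the two generate the same $\sigma$-field --- worth stating, since the lemma's Gaussian density \eqref{07043} is written with the moments of $\bar{x}_t$.
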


\begin{proof}
See, \citeA{Battulga24a}.
\end{proof}

\begin{lem}\label{lem04}
Let
\begin{equation}\label{07043}
f(\bar{y}_t|C_{\alpha},\Sigma_{\alpha},\bar{s}_t,\mathcal{F}_0)=\frac{1}{(2\pi)^{nt/2}|\Sigma_{11}|^{1/2}}\exp\Big\{-\frac{1}{2}\big(\bar{y}_t-\mu_1\big)'\Sigma_{11}^{-1}\big(\bar{y}_t-\mu_1\big)\Big\},
\end{equation}
where $\mu_1:=\big(\mu_{1|0}'(\mathcal{H}_0),\dots,\mu_{t|0}'(\mathcal{H}_0)\big)'$ and $\Sigma_{11}:=\big(\Sigma_{i_1,i_2|0}(\mathcal{H}_0)\big)_{i_1,i_2=1}^t$. Then, we have that
\begin{equation}\label{ad001}
f\big(C_{\hat{s}},\Sigma_{\hat{s}},s|\mathcal{F}_t\big)=\frac{f(\bar{y}_t|C_{\alpha},\Sigma_{\alpha},\bar{s}_t,\mathcal{F}_0)f(C_{\hat{s}},\Sigma_{\hat{s}}|\hat{s},\mathcal{F}_0)f(s|\mathcal{F}_0)}{\displaystyle \sum_{\bar{s}_t}\bigg(\int_{C_{\alpha},\Sigma_{\alpha}}f(\bar{y}_t|C_{\alpha},\Sigma_{\alpha},\bar{s}_t,\mathcal{F}_0)f(C_{\alpha},\Sigma_{\alpha}|\alpha,\mathcal{F}_0)dC_\alpha d\Sigma_\alpha\bigg)f(\bar{s}_t|\mathcal{F}_0)}
\end{equation}
for $t=1,\dots,T$.
\end{lem}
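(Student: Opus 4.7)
The plan is to mirror the proof of Lemma \ref{lem03}, replacing the risk-neutral measure $\tilde{\mathbb{P}}$ by the real probability measure $\mathbb{P}$. The statement is simply Bayes' rule applied to the tuple $(C_{\hat{s}},\Sigma_{\hat{s}},s)$ given the enlarged information $\mathcal{F}_t=\mathcal{F}_0\vee \sigma(\bar{x}_t)$, with the crucial point being that the likelihood of the observed vector $\bar{y}_t$ depends only on the parameters indexed by regimes that have already occurred, i.e.\ by the ``past'' regime set $\alpha$, and not on the ``future-only'' regime set $\delta$.

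First, I would write
\begin{equation*}
f\bigl(C_{\hat{s}},\Sigma_{\hat{s}},s\bigl|\mathcal{F}_t\bigr)
=\frac{f\bigl(\bar{y}_t,C_{\hat{s}},\Sigma_{\hat{s}},s\bigl|\mathcal{F}_0\bigr)}{f\bigl(\bar{y}_t\bigl|\mathcal{F}_0\bigr)}
\end{equation*}
and then apply the chain rule to the numerator,
\begin{equation*}
f\bigl(\bar{y}_t,C_{\hat{s}},\Sigma_{\hat{s}},s\bigl|\mathcal{F}_0\bigr)
=f\bigl(\bar{y}_t\bigl|C_{\hat{s}},\Sigma_{\hat{s}},s,\mathcal{F}_0\bigr)\,f\bigl(C_{\hat{s}},\Sigma_{\hat{s}}\bigl|s,\mathcal{F}_0\bigr)\,f(s|\mathcal{F}_0).
\end{equation*}
Two reductions follow: the data $\bar{y}_t$ is generated by the VAR recursion \eqref{05047} driven only by regimes $s_1,\dots,s_t$, so it is conditionally independent of $(C_\delta,\Sigma_\delta)$ and of $\bar{s}_t^c$ given $(C_\alpha,\Sigma_\alpha,\bar{s}_t,\mathcal{F}_0)$, giving $f(\bar{y}_t|C_{\hat{s}},\Sigma_{\hat{s}},s,\mathcal{F}_0)=f(\bar{y}_t|C_\alpha,\Sigma_\alpha,\bar{s}_t,\mathcal{F}_0)$; and conditioning on the whole vector $s$ is equivalent to conditioning on the set of distinct regimes $\hat{s}$, so that $f(C_{\hat{s}},\Sigma_{\hat{s}}|s,\mathcal{F}_0)=f(C_{\hat{s}},\Sigma_{\hat{s}}|\hat{s},\mathcal{F}_0)$, which by the independence hypothesis \eqref{08011} factorises as $f(C_\alpha,\Sigma_\alpha|\alpha,\mathcal{F}_0)\,f_*(C_\delta,\Sigma_\delta|\delta,\mathcal{F}_0)$.

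Next I would handle the denominator by marginalisation. Splitting $\sum_s=\sum_{\bar{s}_t}\sum_{\bar{s}_t^c}$ and integrating over $(C_\delta,\Sigma_\delta)$ first, the factor $\int f_*(C_\delta,\Sigma_\delta|\delta,\mathcal{F}_0)\,dC_\delta\,d\Sigma_\delta=1$ by \eqref{08012}, and then $\sum_{\bar{s}_t^c}f(\bar{s}_t^c|\bar{s}_t,\mathcal{F}_0)=1$ as well, so that
\begin{equation*}
f\bigl(\bar{y}_t\bigl|\mathcal{F}_0\bigr)
=\sum_{\bar{s}_t}\biggl(\int_{C_\alpha,\Sigma_\alpha}f\bigl(\bar{y}_t\bigl|C_\alpha,\Sigma_\alpha,\bar{s}_t,\mathcal{F}_0\bigr)f\bigl(C_\alpha,\Sigma_\alpha\bigl|\alpha,\mathcal{F}_0\bigr)\,dC_\alpha\,d\Sigma_\alpha\biggr)f(\bar{s}_t|\mathcal{F}_0),
\end{equation*}
which is exactly the denominator in \eqref{ad001}. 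Substituting numerator and denominator yields the claimed formula.

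The main obstacle is the bookkeeping around the partition $\mathcal{A}_{\hat{s}}=\mathcal{A}_\alpha\sqcup\mathcal{A}_\delta$: one must argue rigorously (i) that $\bar{y}_t$ is measurable with respect to $\sigma(C_\alpha,\Sigma_\alpha,\bar{s}_t)\vee\mathcal{F}_0$, which is immediate from iterating \eqref{05047} up to time $t$, and (ii) that the block-independence structure \eqref{08010} survives regrouping the blocks by the sets $\alpha$ and $\delta$, which is precisely the content of \eqref{08011}. Once these two points are in place, the Bayes-rule calculation above is mechanical and, being measure-free (it uses only the Gaussian form of the likelihood through $\mu_1,\Sigma_{11}$, which here are computed from the real-measure dynamics \eqref{05051}--\eqref{05052}), it is entirely parallel to the risk-neutral derivation in Lemma \ref{lem03}.
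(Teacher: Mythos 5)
Your proposal is correct and matches the paper's intended argument: the paper's own proof is just the remark that one proves the lemma by following the derivation of Lemma \ref{lem03} (i.e.\ the Bayes-rule computation of the cited reference) with the risk-neutral dynamics replaced by the real-measure dynamics \eqref{05051}--\eqref{05052}, which is precisely the chain-rule/conditional-independence/marginalisation calculation you spell out. No discrepancy to report.
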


\begin{proof}
By following \citeA{Battulga24a}, one can prove the Lemma 3.
\end{proof}

\bibliographystyle{apacite}
\bibliography{References}

\end{document}